\title{Revisiting the Okubo--Marshak argument}
\author{Christian Ga\ss,$^1$
José M. Gracia-Bondía$^{2,3}$%
\footnote{Email: jmgb@unizar.es}
\ and Jens Mund$^4$%
\\[6pt]
{\footnotesize $^1$ Institute for Theoretical Physics, 
Georg-August-University Göttingen, 37077 Göttingen, Germany}
\\[3pt]
{\footnotesize $^2$ CAPA and Departamento de Física Teórica,
Universidad de Zaragoza, Zaragoza 50009, Spain}
\\[3pt]
{\footnotesize $^3$ Laboratorio de Física Teórica y Computacional,
Universidad de Costa Rica, San Pedro 11501, Costa Rica}
\\[3pt]
{\footnotesize $^4$ Departamento de Física, 
Universidade Federal de Juiz de Fora, Juiz de Fora 36036-900, MG,
Brasil}
}
\date{\today}
\DeclareMathOperator{\T}{T}         
\DeclareMathOperator{\tsum}{{\textstyle\sum}} 
\newcommand{\dl}{\delta}            
\newcommand{\eps}{\varepsilon}      
\newcommand{\ka}{\kappa}            
\newcommand{\La}{\Lambda}           
\newcommand{\la}{\lambda}           
\newcommand{\sg}{\sigma}            
\newcommand{\vf}{\varphi}           
\newcommand{\bD}{\mathbb{D}}        
\newcommand{\bM}{\mathbb{M}}        
\newcommand{\bR}{\mathbb{R}}        
\newcommand{\bS}{\mathbb{S}}        
\newcommand{\sD}{\mathcal{D}}       
\newcommand{\sO}{\mathcal{O}}       
\newcommand{\sR}{\mathcal{R}}       
\newcommand{\sS}{\mathcal{S}}       
\newcommand{\dv}{{\mathrm{div}}}    
\newcommand{\head}{\mathrm{H}}      
\newcommand{\rK}{\mathrm{K}}        
\newcommand{\rL}{\mathrm{L}}        
\newcommand{\sym}{{\mathrm{sym}}}   
\newcommand{\tail}{\mathrm{T}}      
\newcommand{\tree}{{\mathrm{tree}}} 
\bmdefine{\ee}{e}             
\bmdefine{\pp}{p}           
\newcommand{\del}{\partial}         
\newcommand{\downto}{\downarrow}    
\newcommand{\otto}{\leftrightarrow} 
\newcommand{\ovl}{\overline}        
\newcommand{\upto}{\uparrow}        
\newcommand{\7}{\dagger}            
\newcommand{\8}{\bullet}            
\newcommand{\half}{{\mathchoice{\thalf}{\thalf}{\shalf}{\shalf}}}
\newcommand{\shalf}{{\scriptstyle\frac{1}{2}}} 
\newcommand{\thalf}{\tfrac{1}{2}}   
\newcommand{\ket}[1]{\,\lvert#1\rangle} 
\newcommand{\set}[1]{\{\,#1\,\}}  
\newcommand{\vev}[1]{\langle\!\langle#1\rangle\!\rangle} 
\newcommand{\word}[1]{\quad\text{#1}\quad} 
\def\wick:#1:{\,\mathopen:#1\mathclose:\,} 
\newcommand{\braket}[2]{\langle#1\mathbin|#2\rangle} 
\theoremstyle{plain}
\newtheorem{prop}{Proposition}      
\renewcommand{\section}{\@startsection{section}{1}{\z@}%
                       {-3.5ex \@plus -1ex \@minus -.2ex}%
                       {2.3ex \@plus.2ex}%
                       {\normalfont\large\bfseries}}
\renewcommand{\subsection}{\@startsection{subsection}{2}{\z@}%
                       {-3.25ex \@plus -1ex \@minus -.2ex}%
                       {1.5ex \@plus .2ex}%
                       {\normalfont\normalsize\bfseries}}
\begin{document}

\maketitle

\begin{flushright}
\textit{To the memory of Bob Marshak and Daniel Testard}
\end{flushright}

\medskip

\begin{abstract}
Modular localization and the theory of string-localized fields have
revolutionized several key aspects of quantum field theory. They
reinforce the contention that \textit{local} symmetry emerges directly
from quantum theory, but global gauge invariance remains in general an
unwarranted assumption, to be examined case by case. Armed with those
modern tools, we reconsider here the classical Okubo--Marshak argument
on the non-existence of a ``strong CP problem'' in quantum
chromodynamics.
\end{abstract}

\medskip

\textit{Keywords}:
String independence, local symmetry, strong CP problem


\section{Introduction: string-localized fields}
\label{sec:quod-raro-fit}

In this paper the case by Okubo and Marshak~\cite{OM92} against
existence of the ``strong CP problem'' in quantum chromodynamics,
which was based on the covariant approach to Yang--Mills theory by
Kugo and Ojima~\cite{VerdaderoKO}, is reassessed from a different
theoretical standpoint. For the purpose we bring to bear the theory of
string-localized quantum fields (SLF).

Those are not a recent invention. The paper by Dirac
\cite{CanadianDirac} should be regarded as a precedent. Charged SLF,
recognizably similar to the modern version, were a brainchild of
Mandelstam~\cite{Mandelstam62}. In a different vein, they were treated
by Buchholz and Fredenhagen~\cite{BF82}. In the eighties they were
further developed by Steinmann
\cite{Steinmann82,Steinmann84,Steinmann85}. They resurfaced over 15
years ago in important papers by one of us, Schroer and
Yngvason~\cite{MundSY04,MundSY06}. Interest in SLF had meanwhile been
sustained and renewed by their obvious connection to modular
structures~\cite{Borchers2000}, with roots in deep mathematical
results of Hilbert algebra theory~\cite{Alain} and a geometrical
interpretation through the Bisognano--Wichmann relations
\cite{BW76,BWJens}. Among the harbingers of their revival we count
papers \cite{DuetschS00,FassarellaS02,BGL02}, concerned with the
proper concept of locality in modern quantum field theory.

The gist of~\cite{MundSY04} was to show that within the SLF
dispensation the Wigner particles of mass zero and unbounded helicity
possess associated quantum fields; those had long before been excluded
from the standard framework, by Yngvason himself~\cite{Yngvason70}.
The detailed treatment in \cite{MundSY06}, apart from reviewing this
fact, directly and comprehensively relates SLF to point-localized
fields of the ordinary sort.

At the price of an extra variable, SL fields offer important
advantages. Two of these are: (a)~String-local fields slip past the
theorem \cite[Sect.~5.9]{Weinberg95I} that it is impossible to
construct on Hilbert space tensor fields of rank~$r$ for massless
particles with helicity~$r$ -- like photons and gluons. (b)~Their
short-distance behaviour, both for massive and massless particles, is
the \textit{same for all bosons} as for scalars, and for all fermions
as for spin-$\half$ particles.%
\footnote{\textit{Free} scalar and spin-$\half$ particles are
non-stringy.}

The primary upshot is that large no-go territories for QFT are now
trespassed. The separation of helicities in the massless limit of
higher spin fields is clarified~\cite{MRS17b}. The
van~Dam-Veltman--Zakharov discontinuity~\cite{vDV70,Zh70} at the
$m \downto 0$ limit of massive gravitons is resolved~\cite{MRS17a}.
Unimpeachable stress–energy-momentum (SEM) tensors for massless fields
of \textit{any} helicity are constructed~\cite{MRS17b,MRS17a} --
allowing for gravitational interaction, and flouting the
Weinberg--Witten theorems in particular. A good candidate SEM for
unbounded helicity particles now exists in SL field theory,
too~\cite{RehrenPLLimit}. The Velo--Zwanziger instability~\cite{VZ69}
is exorcised~\cite{Schroernuclear}. The prohibition for covariant
spinorial field types $(A,B)$ to represent massless Wigner particles
of helicity~$r$, unless $r = B - A$, argued in
\cite[Sect.~5.9]{Weinberg95I}, is made void. SL~field theory is also
helping to deal with profound, age-old problems of QED~\cite{MRS20}.

In summary, SL fields sit comfortably midway between ``ordinary'' and
algebraic QFT. More to our point here, renormalization of QFT models
is to take place without calling upon ghost fields, BRS invariance and
the like -- since for SL fields one need not surrender positivity
neither of the energy, nor of the state spaces for the physical
particles.

Regarding the Standard Model, rigorous perturbative field theory
\cite{EpsteinGlaser73} together with the \textit{principle of string
independence} of the $\bS$-matrix allowed us to show some time ago
that chirality of the electroweak interactions does not have to be put
by hand, as in the GWS model. Rather, it follows ineluctably from the
massive character of their interaction carriers -- a conclusion
irrespective of ``mechanisms'' conjuring the mass~\cite{Rosalind}. The
proof built on outstanding work by Aste, Dütsch and Scharf at the turn
of the century, on a \textit{quantum} formulation for gauge
invariance~\cite{AS99,DuetschS99}.%
\footnote{See \cite{Grigore2000} as well.}
Following Marshak, we refer to electroweak theory as quantum
flavourdynamics (QFD) henceforth.

\medskip

SLF have their own drawbacks, to be sure. Practical calculations of
loop graphs with SL fields in internal lines are rather challenging,
and a proof of normalizability at all orders is still pending. The
applications of SLF so far concern mostly questions of principle,
badly dealt with within conventional QFT. Of which there are plenty.
This paper addresses one of them, that can be handled by means of tree
graphs.

\subsection{Massless bosonic SLF: general theory}
\label{ssc:ad-hoc}

The term ``string''%
\footnote{Not to be confused with the strings of string theory.}
in the present context precisely denotes a ray starting at a point~$x$
in Minkowski space $\bM_4$ that extends to infinity in a spacelike or
lightlike direction. This is the natural limit of the spacelike cones
in the intrinsic localization procedure of~\cite{BGL02}. The set of
such strings can be parametrized by the one-sheet hyperboloid
$H_3 \subsetneq \bM_4$ (``de Sitter space'') of neck radius equal
to~one.%
\footnote{Use of the limiting case of lightlike strings, parametrized
by the celestial spheres $S^2 \cup S^2 \subsetneq \bM_4$, simplifies
some formulae. They are a little troublesome from the
functional-analytic viewpoint, however.}

By way of example, let us look first at the ``Abelian'' case of
helicity $r = 1$. Let $d\mu(p) = (2\pi)^{-3/2}\,d^3\pp/2|\pp|$. The
quantum electromagnetic field strength, built on Wigner's
helicity~$\pm 1$ unirreps of the Poincaré group, is written:
\begin{equation}
F_{\mu\nu}(x) = \sum_{\sg=\pm} \int d\mu(p)\, \bigl[
e^{i(px)} u^\sg_{\mu\nu}(p) a^\7(p,\sg)
+ e^{-i(px)} \bar u^\sg_{\mu\nu}(p) a(p,\sg) \bigr],
\end{equation}
where $(px)\equiv g_{\mu\nu} p^\mu x^\nu$ with mostly negative metric
$(g_{\mu\nu})$; the intertwiners are of the form 
$u^\sg_{\mu\nu}(p) = i e^\sg_\nu(p) p_\mu - i e^\sg_\mu(p) p_\nu$, for
$e^\pm$ a polarization zweibein, with $\bar u$ denoting the complex
conjugate of~$u$.

The corresponding string-localized vector field for photons is given
by:
\begin{subequations}
\begin{gather}
A_\mu(x,e) = \sum_{\sg=\pm} \int d\mu(p)\, \bigl[
e^{i(px)} u^\sg_\mu(p,e) a^\7(p,\sg) 
+ e^{-i(px)} \bar u^\sg_\mu(p,e) a(p,\sg) \bigr],
\\
\text{its intertwiners } u^\sg_\mu(p,e) \text{ being of the form}\quad
\frac{(pe) e^\pm_\mu(p) - (e^\pm(p)e) p_\mu}{(pe)}.
\end{gather}
\end{subequations}

The denominator $(pe)$ in $u^\sg_\mu(p,e)$ is shorthand for the
distribution $\bigl((pe) + i0 \bigr)^{-1}$, to be smeared in the
string variable~$e$. The following key relations, respectively
integral and differential, are effortlessly derived:
\begin{align}
A_\mu(x,e) = \int_0^\infty dt\, F(x + te)_{\mu\nu}\, e^\nu;  \quad
\del_\mu A_\nu(x,e) - \del_\nu A_\mu(x,e) = F_{\mu\nu}(x);
\label{eq:kaizen} 
\end{align}
so that $A_\mu(x,e)$ is a \textit{bona fide} potential for
$F_{\mu\nu}(x)$. For the first identity:
\begin{equation}
i\bigl( p_\mu e^\pm_\la(p) - p_\la e^\pm_\mu(p) \bigr) e^\la
\lim_{\eps\downto 0} \int_0^\infty dt\, e^{it((pe)+i\eps)}
= e^\pm_\mu(p) - p^\mu\,\frac{(e^\pm(p)\,e)}{(pe)}.
\end{equation}
It ought to be clear that the vector potential $A_\mu(x,e)$ fulfils
the equations $e^\mu A_\mu(x,e) = \del^\mu A_\mu(x,e) = 0$. These
``transversality'' properties are necessary for the free field
$A_\mu(x,e)$ acting on the physical Hilbert space
\cite[Sect.~5]{MundSY06}. Their role is to reduce the number of
degrees of freedom, as required for on-shell photons -- in much the
same way as the six components of the electromagnetic field reduced by
the Maxwell equations propagate two degrees of freedom.%
\footnote{Only the second could perhaps be taken as a Lorenz ``gauge
condition''.}

The reader should appreciate fully the deep differences between
$A_\mu(x,e)$ and the $A_\mu(x)$ potentials of standard QFT. In 
particular, there are no ``pure gauge'' configurations in QED or QCD,%
\footnote{Understood in this paper in the narrow sense of
gluodynamics, the theory of pure Yang--Mills fields.}
when working in SL field theory. The field $A_\mu(x,e)$ lives on the
same Fock space as~$F_{\mu\nu}$. Thus the second equation in
\eqref{eq:kaizen} is an operator relation; which is not the case for
the similar one in the usual framework. All this makes easier the
physical interpretation of the present one.

In contrast with the standard formalism, $A_\mu(x,e)$ is perfectly
covariant: for $\La$ a Lorentz transformation, $c$ a translation, and
$U$ the second quantization of the mentioned unirrep pair of the
Poincaré group:
\begin{equation}
U(c,\La) A_\mu(x,e) U^\7(c,\La) 
= (\La^{-1})_\mu^{\,\la} A_\la(\La x + c, \La e).
\end{equation}

It is important to realize that the string-differential of the photon 
field is a gradient:
\begin{align}
d_e A_\mu(x,e) 
&:= \sum_\rho de^\rho \frac{\del A_\mu(x,e)}{\del e^\rho}
= -\sum_{\sg=\pm} \int d\mu(p)\, \biggl[ 
e^{i(px)} \biggl( \frac{p_\mu e_\rho^\sg}{(pe)} 
- \frac{p_\rho p_\mu (e^\sg e)}{(pe)^2} \biggr) a^\7(p,\sg)
\notag \\
&\qquad
+ e^{-i(px)} \biggl( \frac{p_\mu e_\rho^\sg}{(pe)} 
- \frac{p_\rho p_\mu(e^\sg e)}{(pe)^2} \biggr)^- a(p,\sg) \biggr]
\,de^\rho
\notag \\
&= i\del_\mu \sum_{\sg=\pm} \int d\mu(p)\, \biggl[
e^{i(px)} \biggl( \frac{e_\rho^\sg}{(pe)} 
- \frac{p_\rho (e^\sg e)}{(pe)^2} \biggr) a^\7(p,\sg)
\notag \\
&\qquad
- e^{-i(px)} \biggl( \frac{e_\rho^\sg}{(pe)} 
- \frac{p_\rho(e^\sg e)}{(pe)^2} \biggr)^- a(p,\sg) \biggr]
\,de^\rho =: \del_\mu u(x,e).
\end{align}
Naturally, $u$ satisfies the wave equation and 
$(d_e)^2 A_\mu = \del_\mu d_e u(x,e) = 0$.

Last, but not least, locality:
\begin{equation}
[A_\mu(x,e), A_\nu(x',e')] = 0
\end{equation}
holds whenever the strings $x + \bR^+ \tilde e$ and 
$x' + \bR^+ e'$ are spacelike separated for all $\tilde e$ in some
open neighbourhood of~$e$: ``causally disjoint''. A proof is found in
\cite[App.~C]{Rosalind}.

\section{Dealing with string independence}
\label{sec:non-observant}

Perturbation theory for the SLF is to be attacked here in the spirit
of ``renormalization without regularization'' of rigorous $\bS$-matrix
theory~\cite{EpsteinGlaser73}. The method engages the construction of
a Bogoliubov-type functional scattering operator $\bS[g;h]$ dependent
on a multiplet~$g$ of external fields and a test function
$h \in \sD(H_3)$ with integral~$1$ that averages over the string
directions. $\bS[g;h]$ is submitted to the customary conditions of
causality, unitarity and covariance. One looks for it as a formal
power series in~$g$,
\begin{equation}
\bS[g;h] := 1 + \sum_{n=1}^\infty \frac{i^n}{n!} \prod_{k=1}^n 
\prod_{l=1}^m \int d^4x_k \int d\sg(e_{k,l})\, g(x_k) h(e_{k,l})
S_n(x_1,\ee_1;\dots;x_n,\ee_n),
\end{equation}
where $\sg$ is the measure on~$H_3$. Only the first-order vertex
coupling $S_1 = S_1(x,\ee)$, a Wick polynomial in the free fields, is
postulated -- already under severe restrictions. It depends on an
array $\ee = (e_1,\dots,e_m)$ of string coordinates, with $m$ the
maximum number of SLF appearing in a sub-monomial of~$S_1$. For
$n \geq 2$, the $S_n$ are time-ordered products that need to be
constructed. Two sets of strings cannot be ordered, after chopping
them into segments if necessary~\cite{Atropos}, if and only if they
touch each other -- see the Appendix for the details. The resulting
exceptional set
\begin{equation}
\bD_2 := \set{(x,\ee;x',\ee') : 
(x + \bR^+ e_k) \cap (x' + \bR^+ e'_l) \neq \emptyset 
\text{ for some } k,l}
\label{eq:exceptional-set} 
\end{equation}
is a set of measure zero that includes the diagonal $x = x'$. A
similar statement holds for $n > 2$. Only that part of $S_n$
contributes to $\bS[g;h]$ which is symmetric under permuting the
string coordinates, which are smeared with the same test 
function~$h$.%
\footnote{This symmetry will be heavily exploited in the following
derivations.}
The extension of the $S_n$-products across $\bD_n$ is the
renormalization problem in a nutshell~\cite{Gass21}.

The natural and essential hypothesis of interacting SLF theory is
simple enough: physical observables and quantities closely related to
them, particularly the $\bS$-matrix, cannot depend on the string
coordinates. This is the intrinsically quantum \textbf{string
independence} principle: colloquially, the strings ``ought not to be
seen''. In this paper it will replace the ``gauge principle'' with
advantage.

For the physics of the model described by $S_1$ to be
string-independent, one must require that a vector field
$Q^\mu(x,\ee)$ exist such that, after appropriate symmetrization in
the string variables
\begin{equation}
d_{e_1} S_1^\sym = (\del Q) \equiv \del_\mu Q^\mu,
\end{equation}
so that on applying integration by parts in the ``adiabatic limit'',
as $g$ goes to a set of constants, the contribution from the
divergence vanishes. Then the covariant $\bS[g;h]$ approaches the
invariant physical scattering matrix~$\bS$, therefore
$U(a,\La) \bS\, U^\7(a,\La) = \bS$, all dependence on the strings
disappearing.

On the face of it, existence of the adiabatic limit is the property
that the $S_k$ be integrable distributions. Due to severe infrared
problems, the latter does not hold in QCD, which involves us here.
However, a recent breakthrough~\cite{Pawel} rigorously establishes
existence of a suitable weaker adiabatic limit (WAL) in~QCD, and so
the above reasoning can proceed.%
\footnote{The cited work is concerned with point-local fields. It can
be generalized to the string-local setting in low orders. A proof of
WAL at all orders in the SLF context is still awaiting.}

\subsection{The Aste--Scharf argument recast in SLF theory}
\label{ssc:ad-inquirendum}

\begin{prop} 
\label{pr:skewsymmetry}
Suppose that we are given $n$ massless fields $A_{\mu a}$,
$(a = 1,\dots,n)$. For their mutual cubic coupling modulo divergences,
string independence at first order enables the Wick product
combination:
\begin{equation}
S_1(x,e_1,e_2) = \frac{g}{2}\,
f_{abc}\, A_{\mu a}(x,e_1) A_{\nu b}(x,e_2) F^{\mu\nu}_c(x),
\label{eq:vetustas} 
\end{equation}
where the $f_{abc}$ are \textbf{completely skewsymmetric}
coefficients. (Subindices that appear twice are summed over.)
\end{prop}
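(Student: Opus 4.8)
The plan is to rerun the Aste--Scharf argument inside the string-local setting, the string derivative $d_{e_1}$ playing the role of the BRST variation and the escort field $u$ that of the ghost. First I would reduce the ansatz. A power-counting renormalizable cubic Wick monomial in the massless potentials and their field strengths has dimension four, hence the schematic form $A\,A\,\del A$; isolating the antisymmetric combination $F=\del A-\del A$ through the second identity in \eqref{eq:kaizen}, disposing of the remaining contractions with the transversality relations $\del^\mu A_\mu(x,e)=0$ and $e^\mu A_\mu(x,e)=0$, and discarding total derivatives, one is left -- up to relabelling of the internal indices, and just as in the point-local treatment~\cite{AS99,DuetschS99} -- with
\[
S_1(x,e_1,e_2)=\tfrac{g}{2}\,g_{abc}\,A_{\mu a}(x,e_1)\,A_{\nu b}(x,e_2)\,F^{\mu\nu}_c(x).
\]
Passing to the symmetrised coupling $S_1^\sym$ and relabelling $a\otto b$, the antisymmetry of $F^{\mu\nu}_c$ shows that only the part of $g_{abc}$ antisymmetric in its first two slots survives; renaming it $f_{abc}=-f_{bac}$ reproduces \eqref{eq:vetustas}.

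Next I would impose string independence, that is, require $d_{e_1}S_1^\sym=\del_\mu Q^\mu$ for some local $Q^\mu$. Only $A_{\mu a}(x,e_1)$ carries $e_1$, and $d_{e_1}A_{\mu a}(x,e_1)=\del_\mu u_a(x,e_1)$, so one Leibniz step gives
\[
(\del_\mu u_a)\,A_{\nu b}\,F^{\mu\nu}_c=\del_\mu\bigl(u_a A_{\nu b}F^{\mu\nu}_c\bigr)-u_a(\del_\mu A_{\nu b})F^{\mu\nu}_c-u_a A_{\nu b}\,\del_\mu F^{\mu\nu}_c .
\]
The last term vanishes by the free Maxwell equation $\del_\mu F^{\mu\nu}_c=0$, while in the middle term the antisymmetry of $F$ together with the second identity in \eqref{eq:kaizen} turn $(\del_\mu A_{\nu b})F^{\mu\nu}_c$ into $\tfrac12 F_{\mu\nu b}F^{\mu\nu}_c$. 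Therefore
\[
d_{e_1}S_1^\sym=\del_\mu Q^\mu-\tfrac{g}{4}\,f_{abc}\,u_a(x,e_1)F_{\mu\nu b}(x)F^{\mu\nu}_c(x),\qquad Q^\mu:=\tfrac{g}{2}\,f_{abc}\,u_a(x,e_1)A_{\nu b}(x,e_2)F^{\mu\nu}_c(x).
\]

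Now comes the decisive point. Since $F_{\mu\nu b}F^{\mu\nu}_c$ is symmetric under $b\otto c$, the residual term retains only the part of $f_{abc}$ symmetric in its last two slots, and string independence forces $\tfrac12(f_{abc}+f_{acb})\,u_a F_{\mu\nu b}F^{\mu\nu}_c$ to be a pure divergence. The real work lies here: one must show that $u_a F_{\mu\nu b}F^{\mu\nu}_c$ is \emph{not} the divergence of any admissible local Wick polynomial -- it is the string-local avatar of the ghost-number-one obstruction of Aste--Scharf~\cite{AS99}. I would settle this exactly as there, by enumerating the dimension-three Lorentz-covariant candidates $\tilde Q^\mu$ with the right field content (of the schematic types $u\,A\,F$, $u\,\del A\,A$, $\del u\,A\,A$) and checking that none of them has the required divergence, or equivalently by exhibiting a few-particle matrix element carrying no overall factor of the total momentum. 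Granting this, $f_{abc}+f_{acb}=0$.

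Finally, combining $f_{abc}=-f_{bac}$ from the first step with $f_{abc}=-f_{acb}$ from the third, and recalling that antisymmetry in two adjacent index pairs already generates the sign representation of the permutation group $S_3$, one concludes that $f_{abc}$ is completely skewsymmetric. For such $f_{abc}$ the residual term in the second step vanishes identically -- a $bc$-antisymmetric tensor contracted with a $bc$-symmetric one -- so in fact $d_{e_1}S_1^\sym=\del_\mu Q^\mu$ holds on the nose, with $Q^\mu$ as displayed, and first-order string independence is secured. The main obstacle is thus the non-triviality claim of the third step; the first, second and last steps are routine bookkeeping with \eqref{eq:kaizen} and the free field equations.
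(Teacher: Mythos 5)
Your route is genuinely different from the paper's: you keep the two string variables of \eqref{eq:vetustas} throughout and locate the first-order obstruction in the monomial $u_a F_{\mu\nu b}F^{\mu\nu}_c$ (symmetric in $b,c$), extracting $f_{abc}=-f_{acb}$ from it, whereas the paper starts from the three-string ansatz \eqref{eq:non-bis}, removes the part of the coefficient symmetric in the last two colour indices as a divergence, and finds the obstruction in $\del_\nu A^2_{\mu a}\,\del^\mu A^{3\nu}_b\,u^1_c$, extracting antisymmetry in the \emph{first} two indices. Your steps 2--4 are correct as far as they go, and the non-triviality claim you flag in step 3 is left at essentially the same level of detail as the paper's own appeal to linear independence of the obstruction monomials, so I do not hold that against you.

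The genuine gap is in step 1. The reduction of the general renormalizable cubic ansatz to the form $A\,A\,F$ is \emph{not} achieved by transversality plus discarding total derivatives: after splitting $\del^\mu A^\nu_c$ into $\thalf F^{\mu\nu}_c$ plus its $\mu\otto\nu$-symmetric partner, the symmetric remainder is a divergence only when the colour coefficient is symmetric under exchange of the \emph{last} two indices (that is the content of the paper's $d_{abc}$ computation, which uses $\del^\mu A_{\mu a}=0$ together with the $e_2\otto e_3$ symmetrization), and this does not cover everything your $F$-projection throws away. Concretely, take $c_{abc}=\dl_{ab}w_c$, i.e.
\begin{equation*}
S_1' \propto w_c\, A_{\mu a}(x,e_1)\,A_{\nu a}(x,e_2)\,\del^\mu A^\nu_c(x,e_3).
\end{equation*}
Its totally antisymmetric part vanishes, so it contributes nothing of the form \eqref{eq:vetustas}; its part antisymmetric in the last two indices, $\thalf(\dl_{ab}w_c-\dl_{ac}w_b)$, is non-zero, so it is not a divergence either. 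Being symmetric in $a\otto b$, it lands entirely in the bin you label ``total derivatives'' and is silently dropped -- yet it can only be excluded by running the string-independence obstruction analysis on it, which is precisely the paper's $f^+$ step and which your scheme never performs on these terms. To close the gap you must either adopt the paper's order of decomposition ($b\otto c$-symmetric part $=$ divergence; string independence then kills the $a\otto b$-symmetric part of the remainder, yielding total skewsymmetry), or else show directly that every coupling with $a\otto b$-symmetric coefficient surviving your $F$-projection produces a non-removable obstruction of the type $u\,\del A\,\del A$. Only your claim $f_{abc}=-f_{bac}$ is affected; the derivation of $f_{abc}=-f_{acb}$ in steps 2--3 stands.
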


Before proceeding, we note that this vertex promulgates a
\textit{renormalizable} theory by power counting. Note also that $S_1$
is intrinsically symmetric in the string coordinates, and that
\begin{equation}
d_{e_1} S_1 = \frac{g}{2}\, \del_\mu \bigl[ f_{abc}\, 
u_a(x,e_1) A_{\nu b}(x,e_2) F^{\mu\nu}_c(x) \bigr]
=: \del_\mu Q^\mu(x,e_1,e_2).
\end{equation}

\begin{proof}[Proof of Prop.~\ref{pr:skewsymmetry}]
We abbreviate $A^i \equiv A(x,e_i)$ for the SLF and make the Ansatz:  
\begin{equation}
S'_1(x,e_1,e_2,e_3)
= g f^1_{abc} A^1_{\mu a} A^2_{\nu b}\,\del^\mu A^{3\nu}_c
\label{eq:non-bis} 
\end{equation}
for the cubic coupling of the fields, the coefficients $f^1_{abc}$
being \textit{a~priori} unknown. We shall show that string
independence forces this to be of the form $S_1$ as in
\eqref{eq:vetustas}, where those numerical coefficients $f_{abc}$ are
completely skewsymmetric. We first split 
$f^1_{abc} =: d_{abc} + f^2_{abc}$ into a symmetric
($d_{abc} = d_{acb}$) and a skewsymmetric part 
($f^2_{abc} = - f^2_{acb}$) under exchange of the second and third
indices. After symmetrizing~\eqref{eq:non-bis} in $e_2 \otto e_3$, the
contribution of~$d_{abc}$ yields a divergence:
\begin{equation}
d_{abc} A^1_{\mu a} (A^2_{\nu b} \del^\mu A^{3\nu}_c 
+ A^3_{\nu b} \del^\mu A^{2\nu}_c) 
= \del^\mu(d_{abc}\ A^1_{\mu a} A^2_{\nu b} A^{3\nu}_c).
\end{equation}

We can therefore replace $S'_1(x,e_1,e_2,e_3)$ by
\begin{equation}
S''_1(x,e_1,e_2,e_3) 
:= g f^2_{abc} A^1_{\mu a} A^2_{\nu b} \del^\mu A^{3\nu}_c.
\label{eq:abusus} 
\end{equation}

Its symmetrized version satisfies 
\begin{equation}
d_{e_1} \tsum_{\pi\in S_3} S''_1(x,e_{\pi(1)},e_{\pi(2)},e_{\pi(3)})
= 2 g f^2_{abc} \bigl( \del_\nu A^2_{\mu a} \,\del^\mu A^{3\nu}_b
+ \del_\nu A^2_{\mu b} \,\del^\mu A^{3\nu}_a \bigr) u^1_c + \dv.
\label{eq:auribus-teneo-lupum} 
\end{equation}

We next split the coefficients $f^2_{abc} = f^+_{abc} + f^-_{abc}$
into symmetric and skewsymmetric parts under exchange of the first two
indices, $f^\pm_{abc} = \pm f^\pm_{bac}$. The skewsymmetric part
$f^-_{abc}$ does enter into~\eqref{eq:auribus-teneo-lupum}, whereas
the symmetric part $f^+_{abc}$ contributes
\begin{equation}
d_{e_1} \sum_{\pi\in S_3} S''_1(x,e_{\pi(1)},e_{\pi(2)},e_{\pi(3)})
= 4g f^+_{abc} \del_\nu A^2_{\mu a} \,\del^\mu A^{3\nu}_b u^1_c + \dv.
\end{equation}

By our basic postulate, this must be a divergence. Since the operators
$\del_\nu A^2_{\mu a}\,\del^\mu A^{3\nu}_b u^1_c$ are linearly
independent, that can happen iff the symmetric part $f^+_{abc}$ is
identically zero. This means complete skewsymmetry of the 
$f^2_{abc} \equiv f^-_{abc} =: f_{abc}$. That is to say, the string
independence principle constrains $S''_1$ in~\eqref{eq:abusus} to the
form
\begin{equation}
S'''_1 = g f_{abc} A^1_{\mu a} A^2_{\nu b} \del^\mu A^{3\nu}_c 
= \frac{g}{2}\, f_{abc}\, A^1_{\mu a} A^2_{\nu b} F_c^{\mu\nu} =: S_1,
\end{equation}
so that the dependence on~$e_3$ is trivial and
formula~\eqref{eq:vetustas} with the stated skewsymmetry condition is
established.
\end{proof}

It is worth pointing out that the above reasoning for the form
of~$S_1$ becomes simpler in our SLF context than in the quantum gauge
invariance approach~\cite[Sect.~3.1]{AS99}, the inference there
being in terms of the customary fields and their ungainly retinue of
unphysical ones.

\subsection{Dealing with string independence at second order:
preliminaries}
\label{ssc:mutare-consilium}

Perturbative string independence should hold at every order in the
couplings, surviving renormalization. Now, the $f_{abc}$ do not yet a
Lie algebra make; for that one needs to prove a \textit{Jacobi
identity}. This is going to be obtained from string independence in
constructing the functional $\bS$-matrix at second order in the
couplings.

Outside the exceptional set $\bD_2$ from~\eqref{eq:exceptional-set},
time-ordered products of string-local fields reduce to ordinary
products where the order of terms is determined by the geometric
time-ordering of the string segments. There, string variation and
derivatives commute with time ordering and we have
\begin{align}
d_{e_1} S_2(x,e_1,e_2;x',e_1',e_2') 
&= d_{e_1} \T\bigl( S_1(x,e_1,e_2) S_1(x',e'_1,e'_2) \bigr) 
\notag \\ 
&= \del_\mu \T\bigl( Q^\mu(x,e_1,e_2) S_1(x',e'_1,e'_2) \bigr).
\label{eq:multam-pecuniam-deportat} 
\end{align}

Above, $\T$ denotes a generic time-ordering recipe, that is, an
extension of the time ordering across~$\bD_2$. It does \textit{not}
automatically follow that~\eqref{eq:multam-pecuniam-deportat} holds
over the \textit{whole} $(x,e_1,e_2;x',e_1',e_2')$ set. The challenge
is to manufacture a time-ordered product $S_2$ so that this property
holds everywhere after appropriate symmetrization in the string
variables. The construction of~$S_2$ by solving the
\textit{obstructions} to such an identity will \textit{impose} the
couplings of ``non-Abelian gauge theory''. The vector quantity $Q^\mu$
will play a central role in our development.

Candidate extensions across $\bD_2$ are restricted by the requirement
that the Wick expansion hold everywhere. We are concerned only with
the tree graph for gluon-gluon scattering. Its corresponding
amplitude is of the general form
\begin{equation}  
\T(UV')_\tree = \sum_{\vf,\chi'} \frac{\del U}{\del\vf}
\vev{\T \vf\,\chi'}\, \frac{\del V'}{\del\chi'},
\label{eq:Wick-expand} 
\end{equation}
where $\vev{-}$ denotes a vacuum expectation value, the sum in the
brackets goes over all the free fields entering the monomials $U(x)$,
$V(x')$, and we employ formal derivation within the Wick polynomials.

For time-ordered products of the fields entering $S_2$ in our model,
we naturally consider in the first place:
\begin{equation}
\vev{\T_0 \vf(x,e)\,\chi(x',e')} := \frac{i}{(2\pi)^4} \int d^4p\,
\frac{e^{-i(p(x - x'))}}{p^2 + i0}\, M^{\vf\chi}(p,e,e'),
\label{eq:pandemonium} 
\end{equation}
where the $M^{\vf\chi}(p,e,e')$ are given by
\begin{equation}
M^{\vf\chi}_{*\8} 
:= \sum_\sg \ovl{u^{\sg;\vf}_*(p,e)}\, u^{\sg;\chi}_\8(p,e'),
\end{equation}
for the appropriate spacetime indices $*,\8$, in terms of the
respective intertwiners. We need:
\begin{subequations}
\label{eq:far-and-few} 
\begin{align}
M^{FA}_{\mu\nu,\la}(p,e')
&= i\biggl( p_\mu g_{\nu\la} - p_\nu g_{\mu\la} 
+ p_\la \frac{p_\nu e'_\mu - p_\mu e'_\nu}{(pe') + i0} \biggr),
\label{eq:sed-tamen} 
\\
M^{FF}_{\mu\nu,\ka\la}(p)
&= p_\nu p_\ka \,g_{\mu\la} + p_\mu p_\la \,g_{\nu\ka} 
- p_\nu p_\la \,g_{\mu\ka} - p_\mu p_\ka \,g_{\nu\la},
\label{eq:utique-fatetur} 
\end{align}
\end{subequations}
to be found for instance in our~\cite{Rosalind}. Note that in all
generality
\begin{equation}
d_e \vev{\T_0 \vf(x,e)\, \chi(x',e')} 
= \vev{\T_0 d_e\vf(x,e)\, \chi(x',e')},
\end{equation}
and similarly for~$d_{e'}$. 

The problem of resolving the obstructions to
Eq.~\eqref{eq:multam-pecuniam-deportat} will be reduced to an
extension problem for numerical distributions by carefully
constructing the contractions~\eqref{eq:pandemonium}. At present we
have, with completely skewsymmetric coefficients~$f_{abc}$:
\begin{equation}
S_1 = \frac{g}{2} f_{abc}\, A^1_{\mu a} A^2_{\nu b} F^{\mu\nu}_c
\word{and}  Q^\mu(x,e_1,e_2) 
= \frac{g}{2} f_{abc}\, u^1_a A^2_{\nu b} F^{\mu\nu}_c.
\end{equation}
Clearly $S_1$ is symmetric under exchange of the string variables. We
want $S_2(x,e_1,e_2;x',e'_1,e'_2)$ to be symmetric under exchange of
$(x,e_1,e_2) \otto (x',e'_1,e'_2)$. Therefore, resolving all
obstructions to Eq.~\eqref{eq:multam-pecuniam-deportat} is equivalent
to inspecting the obstruction
\begin{align}
\sO^1 
&:= d_{e_1} \bigl( \T_0(S_1(x,e_1,e_2) S_1(x',e'_1,e'_2)) \bigr)_\tree
- \del_\mu \T_0\bigl( Q^\mu(x,e_1,e_2) S_1(x',e'_1,e'_2) \bigr)_\tree
\notag \\
&= \T_0\bigl( \del_\mu Q^\mu(x,e_1,e_2) S_1(x',e'_1,e'_2) \bigr)_\tree
- \del_\mu \T_0\bigl( Q^\mu(x,e_1,e_2) S_1(x',e'_1,e'_2) \bigr)_\tree,
\label{eq:existimatione-mentis} 
\end{align}
to investigate how it can be made to vanish after appropriate
symmetrization. To the purpose, with the delta-function $\dl_e$ along
the string~$e$ defined as
\begin{equation}
\dl_e(x - x') := \int_0^\infty dt\, \dl(x + te - x'),
\end{equation}
one obtains, for similarly coloured gluons:
\begin{subequations}
\label{eq:plus-in-re} 
\begin{align}
\del_\mu \vev{\T_0(F^{\mu\nu}(x) A^\la(x',e))}
&= i \bigl[ g^{\nu\la} \dl(x - x') 
- e^\nu \del^\la \dl_{-e}(x - x') \bigr];
\label{eq:plus-in-re-est} 
\\
\del_\mu \vev{ \T_0(F^{\mu\nu}(x) F^{\ka\la}(x'))} 
&= i(g^{\nu\ka} \del^\la - g^{\nu\la} \del^\ka) \,\dl(x - x').
\label{eq:plus-in-re-quam} 
\end{align}
\end{subequations}
These formulae are easily derived by use of Eqs.\
\eqref{eq:pandemonium}, \eqref{eq:sed-tamen}
and~\eqref{eq:utique-fatetur}, respectively.

\subsection{The Jacobi identity emerges}
\label{ssc:causa-perit}

We compute, with an eye on~\eqref{eq:Wick-expand} and another 
on~\eqref{eq:existimatione-mentis}:
\begin{align}
\sO^1 &= -\frac{g}{2} f_{abc} \sum_{\chi'} u^1_a A^2_{\nu b}  
\del_\mu \vev{\T_0 F^{\mu\nu}_c \chi'}
\frac{\del S_1(x',e'_1,e'_2)}{\del\chi'} 
\notag \\
&= - \frac{g^2}{4} f_{abc} f_{dfc} u^1_a A^2_{\nu b} \bigl[ 
\del_\mu \vev{\T_0 F^{\mu\nu} F'^{\ka\la}} 
A'^{1'}_{\ka d} A'^{2'}_{\la f}
\notag \\
&\qquad + \del_\mu \vev{\T_0 F^{\mu\nu} A'^{1'}_\ka} A'^{2'}_{\la d}
F'^{\ka\la}_f 
- \del_\mu \vev{\T_0 F^{\mu\nu} A'^{2'}_\la} A'^{1'}_{\ka d}
F'^{\ka\la}_f \bigr],
\end{align}
where $A'^{1'}_{\ka d} \equiv A_{\ka d}(x',e'_1)$, 
$A'^{2'}_{\la f} \equiv A_{\la f}(x',e'_2)$, and similarly.
Employing~\eqref{eq:plus-in-re-est} and~\eqref{eq:plus-in-re-quam}, 
this obstruction equals 
\begin{align}
- i\frac{g^2}{4} f_{abc} f_{dfc} u^1_a A^2_{\nu b} 
&\bigl[ A^{1'}_{\ka d} A^{2'}_{\la f}
(g^{\nu\ka} \del^\la - g^{\nu\la} \del^\ka) \,\dl(x - x') 
\notag \\
&\enspace + A'^{2'}_{\la d} F'^{\ka\la}_f \bigl( g^\nu_\ka\,\dl(x - x')
- e'^\nu_1 \del_\ka \dl_{-e'_1}(x - x') \bigr)
\notag \\
&\enspace - A'^{1'}_{\ka d} F'^{\ka\la}_f \bigl( g^\nu_\la\,\dl(x - x')
- e'^\nu_2 \del_\la \dl_{-e'_2}(x - x') \bigr) \bigr].
\label{eq:pacta-quae-turpem} 
\end{align}
Exploiting 
$\del^\ka \dl_{-e'_i}(x - x') = -\del_{x'}^\ka \dl_{-e'_i}(x - x')$ as
well as skewsymmetry of $f_{dfc}$ and Maxwell's equations for
$F'^{\ka\la}_f$, we see that terms of the type
\begin{equation}
A'^{2'}_{\la d} F'^{\ka\la}_f\, \del_\ka\,\dl_{-e'_1}(x - x')
= -\del'_\ka \bigl[ 
A'^{2'}_{\la d} F'^{\ka\la}_f \,\dl_{-e'_1}(x - x') \bigr]
\end{equation}
form a divergence of an expression supported at the exceptional
set~$\bD_2$. Integrating by parts in the first line
\eqref{eq:pacta-quae-turpem}, the obstruction reads, up to a
divergence supported at~$\bD_2$,
\begin{subequations}
\label{eq:dura-lex-sed-lex} 
\begin{align}
& -i \frac{g^2}{4} f_{abc} f_{dfc} \bigl[
- \del^\la u^1_a A^2_{\nu b} A^{1'\nu}_d  A^{2'}_{\la f}
- u^1_a \del^\la A^2_{\nu b} A^{1'\nu}_d  A^{2'}_{\la f} 
+ \del^\ka u^1_a A^2_{\nu b} A^{1'}_{\ka d} A^{2'\nu}_f
\notag \\
&\hspace*{6em} 
+ u^1_a \del^\ka A^2_{\nu b} A^{1'}_{\ka d} A^{2'\nu}_f
+ u^1_a A^2_{\nu b} A^{2'}_{\la d} F^{\nu\la}_f 
- u^1_a A^2_{\nu b} A^{1'}_{\ka d} F^{\ka\nu}_f \bigr]\,\dl(x - x')
\notag \\
&= -i \frac{g^2}{4} f_{abc} f_{dfc} u^1_a \bigl[
F_b^{\mu\nu} A^{1'}_{\mu d} A^{2'}_{\nu f}
+ A^2_{\mu b} A^{2'}_{\nu d} F_f^{\mu\nu} 
+ A^2_{\mu b} A^{1'}_{\nu d} F_f^{\mu\nu} \bigr] \,\dl(x - x')
\label{eq:dura-lex} 
\\
&\quad -i \frac{g^2}{4} f_{abc} f_{dfc} d_{e_1} \bigl[
A^{1\mu}_a A^2_{\nu b} A^{1'}_{\mu d} A^{2'\nu}_f
- A^{1\mu}_a A^2_{\nu b} A^{1'\nu}_d A^{2'}_{\mu f} \bigr]
\,\dl(x - x').
\label{eq:sed-lex} 
\end{align}
\end{subequations}
We have used that $\del^\mu u^1_a = d_{e_1} A^{1\mu}_a$. On 
symmetrizing in the variables $e_2,e'_1,e'_2$, the
line~\eqref{eq:dura-lex} becomes proportional to 
\begin{equation}
- i\,g^2 u^1_a F_b^{\mu\nu} \bigl[ A^{1'}_{\mu d} A^{2'}_{\nu f}
+ A^2_{\mu d} A^{2'}_{\nu f} + A^2_{\mu d} A^{1'}_{\nu f} \bigr]
\bigl[ f_{abc} f_{cdf} + f_{afc} f_{cbd} + f_{adc} f_{cfb} \bigr]
\,\dl(x - x').
\end{equation}
It follows that the Jacobi identity:
\begin{equation}
f_{abc} f_{cdf} + f_{afc} f_{cbd} + f_{adc} f_{cfb} = 0
\end{equation}
is a \textit{necessary} condition for the obstruction to string
independence to vanish.

\subsection{The quartic term}
\label{eq:casus-foederis}

We still have to deal with the term of the type
$\del uAAA \sim d_{e_1} AAAA$ in~\eqref{eq:sed-lex}. Using the
skewsymmetry of the~$f_{abc}$, symmetrization of~\eqref{eq:sed-lex} in
the variables $e'_1$, $e'_2$ yields the identical symmetrization of
\begin{equation}
d_{e_1} \Bigl[ -i \frac{g^2}{2} f_{abc} f_{dfc} A^\mu_a(x,e_1)
A^\nu_b(x,e_2) A_{\mu d}(x,e'_1) A_{\nu f}(x,e'_2) \Bigr]
\,\dl(x - x').
\label{eq:pacta-sunt-servanda} 
\end{equation}

Thus, to keep string independence, a summand
\begin{equation}
i\,\frac{g^2}{2}\, f_{abc} f_{dfc} A^\mu_a(x,e_1) A^\nu_b(x,e_2) 
A_{\mu d}(x,e'_1) A_{\nu f}(x,e'_2) \,\dl(x - x'),
\label{eq:res-sacra} 
\end{equation}
whose string-derivative $d_{e_1}$ cancels the expression
\eqref{eq:pacta-sunt-servanda}, must be added to~$S_2$. This is the
four-gluon coupling, usually attributed to the ``covariant
derivative'' present in the ``kinetic'' QCD Lagrangian. Now, since in
the present dispensation an $AAAA$ term is also renormalizable to
begin with, we could have introduced it from the outset. Then a
discussion parallel to the above leads again to
Eq.~\eqref{eq:res-sacra} with precisely the same second-order
coefficient in the coupling constant.

\medskip

In summary: string independence of the $\bS$-matrix at second order
holds \textit{if and only if} the Jacobi identity with completely
skewsymmetric $f_{abc}$ for the cubic coupling~\eqref{eq:vetustas}
holds \textit{and} the above quartic term~\eqref{eq:res-sacra} is
present at that order in the $\bS$-matrix.

\section{Discussion}
\label{sec:nemine-contradicente}

The outcome of the previous arguments, together with the lessons on
QFD in~\cite{Rosalind}, is that Lie algebra structures of the compact
type of necessity govern interactions in QFT. Compactness is related
to complete skewsymmetry: a finite-dimensional Lie algebra structure
with generators $X_a$ defined by $[X_a, X_b] = \sum_c f_{abc} X_c$
does require $f_{abc} = -f_{bac}$ and the Jacobi identity, but not
$f_{abc} = -f_{acb}$ in general. The extra requirement imposed by
string independence leads to a negative definite Killing form and thus
semisimple compactness -- see for instance
\cite[Sect.~3.6]{DuistermaatKolk}.

The authors presently know of \textit{five} different arguments within
perturbative QFT for the reductive Lie algebra structure of the
interactions: the already classical one by Cornwall, Levin and
Tiktopoulos from unitarity bounds at high energy~\cite{CLT}; the
Aste--Scharf analysis referred to in
subsection~\ref{ssc:ad-inquirendum}; the one in this paper, and two
more found in the book~\cite{MatthewEvangile}: one in its Chapter~27,
by elaborating on the spinor-helicity formalism for gluon scattering
calculations, and another -- most charming of them all -- in its
Chapter~9, by a variant of Weinberg's ``soft limit'' reasoning, long
ago applied to link helicity~$1$ particles with charge conservation
and helicity~$2$ particles with universality of gravity. The
irrelevance of the gauge condition in the old direct construction of
scattering amplitudes by Zwanziger~\cite{Zwanziger64} also comes to
mind.

This is a good place to take stock of the lessons from
our~\cite{Rosalind}. Our argument there was also motivated, in a
somewhat contrarian way, by Marshak's thoughts on the ``neutrino
paradigm'' in his posthumous book \textit{Conceptual Foundations of
Modern Particle Physics} \cite[Chaps.~1,6]{REM} -- 
see~\cite{Belinda} as well. The sole inputs of our treatment of
flavourdynamics in it are the \textit{physical} particle types, masses
and charges: ``spontaneous symmetry breaking'', which comes in succour
of the conventional gauge models, is not required, since the SLF
models are renormalizable to begin with, irrespectively of mass.
String independence at first and second order rules the bosonic
couplings, just as in gluodynamics; the non-vanishing masses just
bring minor complications. Couplings between bosons and fermions
\textit{a~priori} are just asked to respect electric charge
conservation and Lorentz invariance. Chirality of the couplings is the
outcome of general string independence. The proof requires the
presence of the scalar particle and is done with Dirac fermions: from
the standpoint of SLF it does not make sense to say that lepton or
quark currents are ``chiral'' in QFD: their couplings~are.%
\footnote{In the current parlance, fermions in the Standard Model are
schizophrenic: non-chiral in QCD, chiral in QFD. To be sure, from the
SLF treatment one can reengineer the GWS model and its warts,
hypercharges and all, in reverse. This was done in
\cite[App.~D]{Rosalind}.}

\medskip

The cited book by Marshak is still a very good guide for the
discussion that follows -- witness perhaps to the lack of progress and
``deeply disturbing features'' \cite{TDLee} affecting the present
theoretical apparatus, contrasting with tremendous progress in the
experimental realm during recent years.

\subsection{Story of two principles}
\label{sec:hanku}

The ``gauge principle'', a \textit{top-down, classical-geometrical}
principle which has ruled particle theory for over sixty years, is
foreign to QFT. Looking back, a defect -- the unavailability in
conventional QFT of a Hilbert space framework for massless particles
-- was elevated into a doctrine. By now \textit{bottom-up, inherently
quantum} principles for the construction of interactions deserve their
place in the sun.

If only for sound epistemological reasons, one should ascertain
whither the bottom-up approach leads, in present-day particle physics.
A Lie algebra and a local Lie group amount to the same thing. Now, the
mathematical beauty and power of spin-offs like extended field
configurations in classical Yang--Mills geometry is not to be denied
-- it is enough to recall~\cite{TheDonald}. And in turn, it is
perfectly legitimate to look for \textit{global} Lie group features in
modern, bottom-up QFT. However, we contend that those need to be
constructed as \textit{quantum} field theory entities -- and, as
anyone who has had to grapple with (say) the rigorous definition of
Wilson loops for interacting fields in SLF formulations well knows,
this is more intricate than presuming classical geometry entities to
possess non-perturbative quantum counterparts.

Consider, for instance, Dirac's ad-hoc non-integrable phase and
magnetic monopole (and their progeny of `t~Hooft--Polyakov monopoles).
The basic idea was seductive, and led directly and elegantly to
electric charge discreteness. Nevertheless, humbler, essentially
perturbative arguments on cancellation of anomalies -- up to and
including the mixed gravitational-gauge anomaly \cite{LAG-W} --
recalled by Marshak -- see \cite[Chap.~7]{REM}
and~\cite{GengMarshakII}, among others~\cite{Chiron} -- are known to
provide an explanation for that discreteness
\cite[Sect.~30.4]{MatthewEvangile}. Whereas magnetic monopoles of any
kind have stubbornly refused to show~up.

This is perhaps the place to comment on the Aharonov--Bohm and
Aharonov--Casher effects being held as ``proofs'' of the
``fundamental'' character of the standard electromagnetic gauge
potential -- since the calculation via the electromagnetic field
depends on a region where the test particle is not allowed. This is
merely a misunderstanding: these effects can be computed by means of
the SL $A$-field, which contains the same information as the
$F$-field. The deep reasons lie in the mind-boggling entanglement
properties of QFT, as compared to ordinary quantum mechanics --
concretely in the failure of \textit{Haag duality} for all quantum
massless fields with helicity $r \geq 1$. This was shown over forty
years ago~\cite{LRDaniel}. Consult \cite{SchroerOnJordan} in this
respect as~well.

\subsection{Reassessing the Okubo--Marshak argument}
\label{ssc:bis-puniri}

Marshak was very open to topological and differential-geometric
constructions in QFT, and actually Chapter~10 of his book~\cite{REM},
particularly Section~10.3 and Subsection~10.3.c, is still a very good
place to learn about instantons and ``vacuum tunneling'' into
topologically inequivalent vacua, apparently leading to the degenerate
$\theta$-vacuum, the ``strong $CP$ problem'' -- since the neutron's
electric dipole moment is vanishingly small -- and, according to some,
to the ``axion''.

The steps to the claim of existence of such a problem are well known.
In the Euclidean setting first, finiteness of the classical action
functional of course requires 
$\lim_{|x|\upto\infty} F_{\mu\nu}(x) = 0$. For which it is enough to
demand that 
$\lim_{|x|\upto\infty} A_\mu(x) \to U^\7(x)\,\del_\mu U(x)$. It is 
then said that $A$ is ``pure gauge''. By using an $A_0$ gauge, a
winding number~$n$ is related directly to the Euclidean action -- in
fact, one is dealing with the homotopy group of the $3$-dimensional
sphere, which is the group of integer numbers. The next step is to
define the vacuum states in Minkowski space in such a way that
instantons become ``tunneling events'' between two different Minkowski
vacua $\ket{m}$, $\ket{m'}$ with respective winding numbers $m,m'$
satisfying $m' - m = n$. Then it is argued that the ``true vacuum'' is
of the form $\ket{\theta} = \sum_m e^{-im\theta} \ket{m}$ with
$\braket{\theta'}{\theta} = 2\pi \dl(\theta - \theta')$; and the value
of~$\theta$ is anyone's guess. This is aptly described by Marshak
in \cite[Subsect.~10.3.c]{REM}.

Nevertheless, he came to regard the axion hypothesis as a bridge too
far. In~\cite{VerdaderoKO} it was rigorously proved that the BRS
charge ``kills'' the \textit{physical} vacuum, which if cyclic must be
unique. But that charge and the antiunitary operator for CPT
invariance commute, and this obviously demands the zero (or $\pi$)
value for the $\theta$-parameter of the instanton makeup~\cite{OM92}.
One could contend that the $\theta$-vacua are non-normalizable (a sign
of trouble in itself) and that the physical vacuum is a superposition
of them. However, by means of a simple procedure, Okubo and Marshak
showed how in that case CPT invariance still guarantees the
experimentally measurable value of~$\theta$ to be zero.

The existence of the original ``visible'' axion had been already
disallowed by experiment and observation~\cite{GoodOldRaffelt} by the
time of the writing of~\cite{REM}, leading to a succession of
``invisible'' axions, that must be ``super-light'', but still face
experimental limitations. Marshak concludes: ``It does seem that the
odds of finding the `invisible' axion are rapidly diminishing and that
the incentive to carry on the ingenious searches for the `invisible'
axions is fueled more by astrophysical and cosmological interest than
by any hope of salvaging the Peccei--Quinn-type solution of the
`strong CP problem' in QCD.'' This rings even truer 29~years later.

Making a contention to the same purpose in SLF theory is
straightforward. The string-localized vector potentials live on
Hilbert space and act cyclically on the vacuum. In fact, every local
subalgebra of operators enjoys this property (this is part of the
Reeh--Schlieder property)~\cite{Witten}). Therefore $\theta$-vacua are
not allowed.

In plainer language: $F^{\mu\nu}(x) \downto 0$ at spatial infinity
implies $A^\mu(x,e) \downto 0$. Thus only the $m = 0$ vacuum (so to
speak) occurs: there are no instantons, and no $\theta$ vacua, and the
so-called strong $CP$~problem is solved. The moral of this part of the
story: quantum field theory should stand on its own feet, rather than
on classical crutches.

\subsection{Coda}
\label{sss:debitor-sui-ipsum}

The ``strong CP problem'' and the axion ideas have undergone mutations
from the time of their inception to the present day. Relations of the
present-day ``invisible'' axion with the $U_A(1)$ anomaly remain murky
-- for a recent review of the latter consult~\cite{Vicente21}. The
contemporary main selling point is still the ``axiverse''; in other
words, the search for the axion is essentially model-free nowadays.
Absence of evidence is not evidence of absence, to be sure -- and so
hunting for ALPs is bound to go on~\cite{Susana}.

A weak point of most analyses on the present subject is that
confinement is not taken into account. However, it does appear to be
inimical to violation of CP invariance~\cite{NakamuraS}.

\subsection*{Acknowledgements}

During the inception and writing of this article, CG received
financial support from the Studienstiftung des deutschen Volkes~e.V.
JM received financial support from the Brazilian research agency CNPq,
and is also grateful to CAPES and Finep. JMG-B received support from
project C1-051 of the Universidad de Costa Rica, MINECO/FEDER project
PGC2018-095328-B-I00, and the COST action CA\,18108. He still
treasures a reprint of~\cite{OM92} handed to him by Robert E. Marshak
during the latter's only visit to Spain, in the fall of 1992,
shortly before his untimely death. We are grateful to E.~Alvarez,
P.~Duch, A.~Herdegen, K.-H.~Rehren, B.~Schroer, I.~Todorov and
J.~C.~Várilly for comments, discussions and helpful suggestions.

\appendix

\section{Time ordering outside the string diagonal}
\label{app:string-chopping}

For a vertex coupling $S_1$ of the form \eqref{eq:vetustas}, we show
$S_2 := \T\bigl( S_1(x,e_1,e_2) S_1(x',e_1',e_2') \bigr)$ is fixed
outside the string diagonal~$\bD_2$ from \eqref{eq:exceptional-set} by
the causal factorization property
\begin{equation}
\T(S_1 S_1') = S_1 S_1' \word{or} S_1' S_1 \,,
\label{eq:CausalFactor} 
\end{equation}
depending on the time ordering of the respective localization regions.

In~\cite{Atropos} the corresponding statement has been shown for the
case of \textit{first degree} string-localized fields by ``chopping''
the strings, and it was indicated why the construction runs into
difficulties for higher-degree Wick polynomials. In the present case,
we have the lucky circumstance that the relevant Wick products are
just ordinary products, see~\eqref{eq:NoWick}, and we are back to the
first-degree case. We give the argument in detail.

Let $(x,e_1,e_2;x',e'_1,e'_2) \in \bD_2$ with $e_1 \neq e_2$,
$e'_1 \neq e'_2$. The localization region of $S_1(x,e_1,e_2)$ are the
two strings $x + \bR^+ e_k$, each of which we chop into a small
compact ``head'' $\sR_k$ containing the vertex~$x$ and an infinite
``tail''~$\sS_k$: 
\begin{equation}
x + \bR^+ e_k =  \sR_k \cup \sS_k,  \word{where}
\sR_k := x + [0,s_k]\,e_k, \quad \sS_k := x + [s_k,\infty)\,e_k \,.
\end{equation}
Accordingly, the string-localized potential decomposes as
$A_\mu(x,e_1) = A_\mu^\head(x,e_1) + A_\mu^\tail(x,e_1)$, where
$A_\mu^\head$ is localized in the compact ``head''~$\sR_1$ and
$A_\mu^\tail$ in the ``tail''~$\sS_1$. Same with $A_\mu(x,e_2)$. Then
the vertex coupling $S_1$ is a linear combination of four terms:
\begin{align}   
S_1 = S_1^{\head\head} + S_1^{\head\tail} 
+ S_1^{\tail\head} + S_1^{\tail\tail},  \qquad 
S_1^{\rK\rL} := \frac{g}{2} f_{abc}
\wick:A^\rK_{\mu a}(e_1) A^\rL_{\nu b}(e_2) F^{\mu\nu}_c: \,. 
\end{align}
The terms $S_1^{\tail\head}$ and $S_1^{\tail\tail}$ can be written as
\begin{equation}
S_1^{\tail\head} = \frac{g}{2} f_{abc} A^\tail_{\mu a}(e_1)\, 
\wick:A^\head_{\nu b}(e_2) F^{\mu\nu}_c:,  \qquad
S_1^{\tail\tail} = \frac{g}{2} f_{abc} A^\tail_{\mu a}(e_1)\,
A^\tail_{\nu b}(e_2) F^{\mu\nu}_c
\label{eq:NoWick} 
\end{equation}
and similarly for $S_1^{\head\tail}$. We have taken 
$A^\tail_{\mu a}(e_1)$ and $A^\tail_{\nu b}(e_2)$ out of the Wick
product; this can be done because all contractions between
$A_{\mu a}$, $A_{\nu b}$ and $F^{\mu\nu}_c$ are zero since the
indices $a,b,c$ are distinct by skewsymmetry of~$f_{abc}$. 

The localization regions of these terms are as follows. The operator
product in $S_1^{\tail\tail}$ is the ordinary product of three linear
fields: one localized in the string $\sS_1$, another in~$\sS_2$ and a
third at~$x$. $S_1^{\tail\head}$ is the product of a linear field
localized in the string~$\sS_1$ and a Wick product localized in the
compact interval~$\sR_2$ that can be made arbitrarily small
around~$x$. Similarly for~$S_1^{\head\tail}$. The term
$S_1^{\head\head}$ is a Wick product localized in $\sR_1 \cup \sR_2$
which also can be made arbitrarily small around~$x$.

We decompose $S_1(x',e_1',e_2')$ in like manner. Then the second order
$S_2$ is a sum of terms of the form
$\T\bigl[ S_1^{\rK\rL}(e_1,e_2)\, S_1^{\rK'\rL'}(e'_1,e'_2) \bigr]$,
where the operator in brackets is the product of string-localized
linear fields and (almost) point-localized Wick products. By our
hypothesis that $e_1 \neq e_2$ and $e'_1 \neq e'_2$, their
localization regions are mutually disjoint. By
\cite[Prop.~2.2]{Atropos}, such regions can be chronologically
ordered, eventually after chopping them into segments, and the
corresponding operators can be time-ordered according
to~\eqref{eq:CausalFactor}. As in the proof of in
\cite[Prop.~3.2]{Atropos}, one sees that the result is Wick's
expansion, where the time-ordering appears only within two-point
functions. (In contrast to the case considered in~\cite{Atropos}, here
there are products of time-ordered two-point functions, but these are
well-defined since they have disjoint arguments.) Again by
skewsymmetry of $f_{abc}$, only contractions between fields localized
on $e$- and $e'$-strings occur. Therefore, the restriction
$e_1 \neq e_2$ and $e'_1 \neq e'_2$ can be removed. 
The proof is complete.

\end{document}